\documentclass[aps,pra,twocolumn,superscriptaddress,longbibliography,nofootinbib]{revtex4-2}
\usepackage[T1]{fontenc}
\usepackage[main=english,french]{babel}
\makeatletter
\@ifundefined{l@fr}{\@namedef{l@fr}{\l@french}}{}
\@ifundefined{captionsfr}{\@namedef{captionsfr}{\captionsfrench}}{}
\@ifundefined{datefr}{\@namedef{datefr}{\datefrench}}{}
\makeatother

\usepackage{etex}
\usepackage{braket}
\usepackage{physics}
\usepackage{amsmath,amssymb,amsthm,amstext,mathrsfs}
\usepackage[colorlinks=true,citecolor=blue,urlcolor=blue]{hyperref}
\usepackage{comment}
\usepackage[pdftex]{graphicx}
\usepackage{enumerate}
\usepackage{times,txfonts}
\usepackage{braket}
\usepackage{svg}
\usepackage{color}
\usepackage{natbib}
\usepackage{amsmath,blkarray}
\usepackage{mathtools}
\usepackage{latexsym}
\usepackage{tabularx, booktabs}
\usepackage{graphics,epstopdf}
\usepackage{graphicx, lipsum}

\usepackage{amsfonts,dsfont}
\usepackage{subcaption}
\usepackage{enumerate}
\usepackage{color,soul}

\newcommand{\be}{\begin{equation}}
\newcommand{\ee}{\end{equation}}
\newcommand{\ba}{\begin{eqnarray}}
\newcommand{\ea}{\end{eqnarray}}

\newcommand{\half}{\frac{1}{2}}

\newtheorem{thm}{Theorem}
\newtheorem{Lemma}{Lemma}

\definecolor{ss}{RGB}{250,80,220}
\begin{document}

\title{Margenau-Hill  distribution as a Necessary and Sufficient Signature of Measurement Incompatibility} 
\author{Partha Patra}
\email{parthapatra144@gmail.com}
\affiliation{Centre for Interdisciplinary Programs, Indian Institute of Technology Hyderabad, Telangana-502284, India}
\author{A. K. Pan}
	\email{akp@phy.iith.ac.in}
	 \affiliation{Department of Physics, Indian Institute of Technology Hyderabad, Telangana-502284, India}

\begin{abstract}
Measurement incompatibility and the negativity of quasi-probability distributions both demonstrate the signature of nonclassicality. However, their relations largely remained qualitative, and no explicit operational connection has been established. We provide an operational link between measurement incompatibility and the Margenau-Hill(MH) quasi-probability distribution associated with two dichotomic observables. {  We first derive the joint measurability (measurement compatibility) condition for any pair of dichotomic observables in arbitrary finite dimension $d$.} We then rigorously prove that for any pair of unsharp dichotomic measurements in dimension $d$, the positivity of the MH distribution is equivalent to joint measurability \emph{i.e.}, the MH distribution is positive \emph{if and only if} the measurements are jointly measurable. We further introduce the MH-like quasi-probability distribution for $n$ dichotomic unsharp observables in arbitrary finite dimension and derive a sufficient condition of joint-measurability when the observables are mutually anticommuting. Finally, we propose an interferometric setup, inspired by quantum-switch architectures, that directly reconstructs MH quasi-probability which in turn provides a test of incompatibility.
\end{abstract}
\maketitle

\emph{Introduction—}
The first noted departure of quantum theory from its classical counterpart is the  existence of measurements that cannot be performed simultaneously\cite{Born1925}. In other words, unlike classical physics, the joint probability distribution of some observables does not exist in quantum theory. For projective measurements, the noncommutativity of two observables successfully captures this notion. However, in quantum theory, a generalized notion of measurement exists in terms of positive-operator-valued measure(POVM) for which the concept of joint measurability or compatibility is introduced. 

The POVMs $\{E_{i}\}$ and $\{E_{j}\}$ associated with the measurements $M_{1}$ and $M_{2}$ respectively, are said to be jointly measurable or compatible \cite{busch1996quantum} if there exists a global POVM $\{G_{i,j}\}$ that satisfies the following properties. i) Positivity: $G_{i,j}\geq 0 \hspace*{0.6cm}\forall i,j$ ii) Completeness :$\sum_{i,j} G_{i,j}=1$ and iii) Reproducibility of marginals: $E_{i}=\sum_{j}G_{i,j}; \ E_{j}=\sum_{i}G_{i,j}$.  The existence of such a grand POVM validates that the measurements are compatible\cite{Uola2014}, otherwise the measurements are termed incompatible. The measurement incompatibility provides the basis for many fundamental nonclassical features such as steering\cite{quintino16,uola16,Quitino2015}, Bell nonlocality \cite{wolf09,Plavala2025} and  for a range of quantum information processing tasks \cite{Guhne2023,mukherjee24}.

On the other hand, the quasi-probability distribution of the outcomes of observables plays an important role in identifying a form of nonclassicality. Such a distribution satisfies all of Kolmogorov's axioms for probability,  but sometimes  can be negative, which is a signature of nonclassicality.  Historically, such a distribution for position and momentum was first proposed by Wigner \cite{Wigner1932}. Later, Kirkwood  \cite{KD1933} and Dirac \cite{Dirac1945} independently developed a quasi-probability distribution, which can be complex and the MH distribution \cite{MH1961,johansen2007} is the real part of it. The negativity of a quasi-probability distribution has been proven to be a resource in quantum computation\cite{Veitch_2012,Howard2014}, quantum information processing \cite{Ferrie_2011}, quantum tomography \cite{Lundeen2012,Thekkadath2016}, and quantum metrology \cite{Arvidsson-Shukur2020,Lupu-Gladstein2022}. Negative quasi-probability and contextuality have been shown to be equivalent notion of nonclassicality \cite{spekkens2008,Pusey2014}. Connections of the negative quasi-probability distribution with  the anomalous weak value \cite{Aharonov1988,Sasmal2025} and the violation of Leggett-Garg inequality \cite{Leggett1985,PAN2023indefinitecausalorder} have also been  made \cite{Dressel2015, AKP_Interference_2020}.

It is then a relevant question whether a connection can be established between these two notions of nonclassicality, viz., the negative quasi-probability and the  measurement incompatibility. Brief attempts have been made in this direction. In \cite{Rahimi-Keshari2021}, it is shown that the positivity of $S$-ordered phase space distribution of two POVMs is sufficient for their joint measurability. In \cite{ghai2023}, a qualitative connection has been made between the negativity of the generalized Wigner distribution and the degree of incompatibility.

In this work, we attempt to establish a direct connection between these two forms of nonclassicality. For a set of POVMs, the associated grand POVM is generally not unique, and identifying it becomes a challenging task in higher-dimensional systems \cite{busch1996quantum,PhysRevD_Busch1986,Banik2013}.  On the other hand, there remain many forms of quasi-probability distributions due to the possibility of ordering in various ways \cite{Glauber1963,Sudarshan1963,Husimi1940,Stephan2021,Hofer2017quasiprobability,Rahimi-Keshari2021}. We note that the negativity of quasi-probability is a signature of nonclassicality, but while it is positive, no conclusion can be made. This means that negative quasi-probability distributions constitute only a sufficient condition for non-classicality.  

 We note that the necessary and sufficient condition for joint measurability of two dichotomic observables is known only for $d=2$. We establish a proof of the necessary and sufficient condition for the joint measurability of any pair of dichotomic observables in arbitrary finite dimension $d$. We demonstrate that  the MH distribution constitutes the necessary and sufficient condition for the joint measurability of unsharp POVMs corresponding to these observables. Therefore, unlike other quasi-probability distributions, the positive MH distribution certifies the classicality, namely, the joint measurability. 

We extend our approach by suitably defining MH-like quasi-probability for more than two unsharp POVMs and derive the sufficient condition for joint measurability of POVMs corresponding to mutually anti-commuting observables in any arbitrary finite dimension. We further note that measurement incompatibility is, by itself, only a mathematical notion. Bridging it to quasi-probability opens up the opportunity to experimentally test it, as quasi-probability has been practically tested \cite{HernandezGomez2024Interferometry,Lostaglio2023kirkwooddirac,AKP_Interference_2020}. We propose an experimental scheme designed to directly measure the MH quasi-probability for two dichotomic observables, which in turn can be considered as the test of measurement incompatibility.

\emph{Joint measurability of arbitrary two dichotomic observables —}
{ We derive the necessary and sufficient condition of joint measurability of any two dichotomic observables in an arbitrary finite dimension $d$} by considering POVMs as the unsharp versions of projective measurements, given by
\begin{eqnarray}
\label{unpovm}
    E^{a_1}_{A_1}=\frac{\mathbb{I}+ \lambda a_1 A_1}{2}; \ \ \ E^{a_2}_{A_2}=\frac{\mathbb{I}+ \mu a_2 A_2}{2} 
\end{eqnarray}
where $\lambda ,\mu \in[0,1]$  are the unsharpness parameters and $a_1, a_2\in\{+1,-1\}$. Here, $A_{1}$ and $A_2$ are dichotomic observables in an arbitrary finite dimension and unsharp version of them are  $A_{1}^{\lambda}= E^{+}_{A_1}- E^{-}_{A_1}$ and $A_{2}^{\mu}= E^{+}_{A_2}- E^{-}_{A_2}$ respectively. 

The joint measurability condition for two observables in the qubit system is derived by Busch \cite{PhysRevD_Busch1986} as 
\begin{eqnarray}
\label{jmcond}
|| \lambda A_1 + \mu A_2||+||\lambda A_1 - \mu A_2||\leq 2
\end{eqnarray}
 {Note that Eq.(\ref{jmcond}) constitutes the necessary and sufficient condition for compatibility for $d=2$ \cite{JAE2019PRA,jae2026PRR}}. { If $\lambda=\mu$,  Eq. (\ref{jmcond}) directly gives that for $\lambda\leq \frac{1}{\sqrt{2}}$ any two qubit observables are jointly measurable. It is shown in \cite{Banik2013} that $\lambda\leq \frac{1}{\sqrt{2}}$ is the sufficient condition for joint measurability of any two observables in an arbitrary finite dimension $d$.} It is  proved in \cite{wolf09} that if a pair of dichotomic measurements is incompatible in quantum theory, then this incompatibility persists in any no-signaling theory. For the case $\lambda \neq \mu$, within any general probabilistic theory, any pair of dichotomic observables is jointly measurable if the condition $\lambda^2 + \mu^2 \leq 1$ holds \cite{Busch_2013}.

{ We again note here that Eq. (\ref{jmcond}) is valid only for qubit observables and its extension for the observables in higher dimension has not hitherto been provided. Here we present the following theorem that provides a necessary and sufficient condition for the joint measurability of two unsharp dichotomic observables in any arbitrary finite dimension $d$.

\begin{thm}
    Two dichotomic observables $A_1^{\lambda}$ and $A_2^{\mu}$ in any arbitrary finite dimension $d$ are jointly measurable if and only if 
\begin{eqnarray}
\label{JM_condition}
  \sqrt{\lambda^2  + \mu^2 + 2 \lambda \mu \operatorname{Spec}(C)}+\sqrt{\lambda^2  + \mu^2 -2  \lambda \mu \operatorname{Spec}(C)}\leq 2
\end{eqnarray}
where $C=\half\{A_1,A_2\}$ and $\operatorname{\operatorname{Spec}}(C)$ denotes the spectrum of eigenvalues of $C$.

\end{thm}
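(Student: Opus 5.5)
The plan is to reduce the problem to the qubit case, where Eq.~(\ref{jmcond}) is already known to be necessary and sufficient. The reduction uses Jordan's lemma for pairs of involutions. I take $A_1,A_2$ to be dichotomic, so $A_1^2=A_2^2=\mathbb{I}$ and the eigenvalues are $\pm1$. If the paper intends observables with spectrum inside $[-1,1]$, this step would have to be reconsidered.

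Jordan's lemma gives an orthogonal decomposition $\mathbb{C}^d=\bigoplus_k \mathcal{H}_k$ into subspaces with $\dim\mathcal{H}_k\in\{1,2\}$, each invariant under both $A_1$ and $A_2$. On a one-dimensional block, $A_1$ and $A_2$ are both $\pm1$. On a two-dimensional block, each of $A_1,A_2$ has eigenvalues $+1$ and $-1$, since otherwise the block would split further. Hence on such a block we can write $A_1=\vec a\cdot\vec\sigma$ and $A_2=\vec b\cdot\vec\sigma$ with unit vectors $\vec a,\vec b$ at angle $\theta_k$. The identity $\half\{\vec a\cdot\vec\sigma,\vec b\cdot\vec\sigma\}=(\vec a\cdot\vec b)\,\mathbb{I}$ then shows that $C$ is block diagonal, with $C|_{\mathcal{H}_k}=\cos\theta_k\,\mathbb{I}$ on the $2$d blocks and $C=\pm1$ on the $1$d blocks. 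So $\operatorname{Spec}(C)=\{\cos\theta_k\}\cup\{\pm1\}$.

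The central step is the following lemma: the pair $E_{A_1},E_{A_2}$ is jointly measurable if and only if every restricted pair $P_kE_{A_i}P_k$ is jointly measurable on $\mathcal{H}_k$, where $P_k$ is the projector onto $\mathcal{H}_k$. The argument runs in both directions.
\begin{itemize}
\item For necessity, suppose $\{G_{a_1a_2}\}$ is a joint POVM. Since $P_k$ commutes with $A_1$ and $A_2$, the compressions $P_kG_{a_1a_2}P_k$ are positive, sum to $P_k$, and have marginals $P_kE^{a_i}_{A_i}P_k=E^{a_i}_{A_i}|_{\mathcal{H}_k}$.
\item For sufficiency, take joint POVMs $G^{(k)}$ on each block. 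Then $G=\bigoplus_k G^{(k)}$ is a joint POVM on the whole space.
\end{itemize}
I expect this block-reduction lemma to be the main point needing care. In particular one must check that the restricted operators are exactly unsharp qubit observables of the form (\ref{unpovm}) with traceless $\vec a\cdot\vec\sigma$, so that Busch's qubit criterion applies verbatim.

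On each $2$d block, Eq.~(\ref{jmcond}) gives
\[
\|\lambda\vec a+\mu\vec b\|+\|\lambda\vec a-\mu\vec b\|=\sqrt{\lambda^2+\mu^2+2\lambda\mu\cos\theta_k}+\sqrt{\lambda^2+\mu^2-2\lambda\mu\cos\theta_k}\le 2 ,
\]
which is Eq.~(\ref{JM_condition}) evaluated at the eigenvalue $c=\cos\theta_k$. On a $1$d block the observables commute, and at $c=\pm1$ the left-hand side of (\ref{JM_condition}) equals $(\lambda+\mu)+|\lambda-\mu|=2\max(\lambda,\mu)\le2$. That condition holds automatically, consistent with commuting observables being compatible.

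Combining these, joint measurability holds if and only if (\ref{JM_condition}) holds for every $c\in\operatorname{Spec}(C)$, which is the meaning of the statement. As a remark, $f(c)=\sqrt{\lambda^2+\mu^2+2\lambda\mu c}+\sqrt{\lambda^2+\mu^2-2\lambda\mu c}$ is even and concave in $c$. The binding constraint therefore comes from the eigenvalue of $C$ of smallest modulus, and if $0\in\operatorname{Spec}(C)$ the condition reduces to $\lambda^2+\mu^2\le1$.
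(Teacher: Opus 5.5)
Your proposal is correct and takes essentially the same route as the paper: Jordan's lemma splits $\mathcal{H}$ into $1$d and $2$d blocks on which $C$ is a scalar, joint measurability reduces to each block, and Busch's qubit criterion, Eq.~(\ref{jmcond}), is applied blockwise. Your compression $P_kG_{a_1a_2}P_k$ is the same device as the paper's pinching $Z\mapsto\tilde Z=\sum_r\Pi_r Z\Pi_r$ in its Lemma, and your direct-sum argument for sufficiency spells out a step the paper leaves implicit.
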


\begin{proof}
    We start by considering the most general form of grand POVM as,
    \begin{eqnarray}\label{GPOVM_2obr}
        G^{a_1,a_2}(Z)=\frac{1}{4}(\mathbb{I}+a_1\lambda A_1+ a_2 \mu A_2 + a_1a_2 \lambda \mu Z)
    \end{eqnarray}
    Here, $a_1,a_2\in\{\pm 1\}$ and $Z$ is a Hermitian operator. The marginals are reproduced as $E^{a_1}_{A_1}=\sum_{a_2}G^{a_1,a_2}(Z)$ and $E^{a_2}_{A_2}=\sum_{a_1}G^{a_1,a_2}(Z)$. Our goal is to find the condition of positivity of $G^{a_1,a_2}(Z)$.
    
    From Jordan's lemma \selectlanguage{french}\cite{Jordon_lemma}\selectlanguage{english}, it is known that the two dichotomic observables $A_1$ and $A_2$ acting on the $d$-dimensional Hilbert space $\mathcal{H}$ can be simultaneously block diagonalized into 1D and 2D subspaces, yielding a total of $\mathcal{R}$ such blocks. That is, the Hilbert space can be decomposed as $\mathcal{H}=\bigoplus_{r\in \mathcal{R}}\mathcal{H}_r$. Let ${\Pi}_r$ denote the orthogonal projector onto the subspace that simultaneously block diagonalizes $A_1$ and $A_2$, follows $\sum_r \Pi_r=\mathbb{I}_d$. 
We consider $\Tilde{Z}=\sum_r\Pi_r Z \Pi_r \equiv \sum_r \Tilde{Z}_r $ and present the following lemma.

\begin{Lemma}
    If $G^{a_1,a_2}(Z) \succeq 0$, then $G^{a_1,a_2}(\Tilde{Z}) \succeq 0$
\end{Lemma}
\begin{proof}
    Consider an arbitrary quantum state $\ket{\psi}$, which can be written as $\ket{\psi}=\sum_r \Pi_r\ket{\psi}\equiv \sum_r \ket{\psi_r}$. Then 
    \begin{eqnarray}
        \bra{\psi}G^{a_1,a_2}(\Tilde{Z})\ket{\psi}=&\bra{\psi}\sum_r \Pi_r G^{a_1,a_2}(Z)\Pi_r\ket{\psi}& \nonumber\\
        &=\sum_r\bra{\psi_r}G^{a_1,a_2}(Z)\ket{\psi_r}&
    \end{eqnarray}
    Since each term satisfies positivity $\bra{\psi_r}G^{a_1,a_2}(Z)\ket{\psi_r}\geq 0$, it follows that $\bra{\psi}G^{a_1,a_2}(\Tilde{Z})\ket{\psi}\geq0$.
\end{proof}

Consequently, without loss of any generality, the condition for joint measurability can be derived by considering the  grand POVM $G^{a_1,a_2}(\Tilde{Z})$.
 The positivity of $G^{a_1,a_2}(\Tilde{Z})$ demands the positivity of POVM in each subspace of the Jordan block. Explicitly,
 \begin{eqnarray}
     G^{a_1,a_2}_r(\Tilde{Z}_r)=\frac{1}{4}(\mathbb{I}_r+a_1\lambda A_{1_r}+ a_2 \mu A_{2_r} + a_1a_2 \lambda \mu \Tilde{Z}_r)\succeq 0
 \end{eqnarray}
 Therefore, for each 2D block, using  Eq~\eqref{jmcond}, the joint measurability condition can be written as
\begin{eqnarray}\label{JM_per_block1}
    \sqrt{\lambda^2+\mu^2+ \lambda \mu \{A_{1_r},A_{2_r}\}}+\sqrt{\lambda^2+\mu^2- \lambda \mu \{A_{1_r},A_{2_r}\}}\leq 2
\end{eqnarray}

Now, we define the quantity $c_r=\half \langle\{A_{1_r},A_{2_r}\}\rangle$. Since, anti-commutator of two qubit observables is proportional to $\mathbb{I}_2$, the quantity $c_r$ is a scalar in  each 2D block and is one of the eigenvalues of $\{A_{1_r},A_{2_r}\}$. Since $A_1$ and $A_2$ are block diagonal, therefore, $\{A_1,A_2\}$ is also block diagonal with eigenvalues $c_r$.

For the 1D block, $A_{1_r}$ and $A_{2_r}$ are scalars taking values $\pm 1$, and they are always jointly measurable irrespective of any value of unsharpness parameter $\lambda$ and $\mu$. Because their anticommutator satisfies $\{A_{1_r},A_{2_r}\} \in \{\pm 2\}$, they always satisfy the Eq.~\eqref{JM_per_block1} for each 1D block. Finally, from Eq. (\ref{JM_per_block1}) we obtain the joint measurability condition for each Jordan block as

\begin{eqnarray}\label{JM_per_block}
\sqrt{\lambda^2+\mu^2+2 \lambda \mu c_r}+\sqrt{\lambda^2+\mu^2-2 \lambda \mu c_r}\leq 2
\end{eqnarray}

It is evident that the general joint measurability condition is solely determined by the eigenvalues of $\{A_1,A_2\}$ and independent of any state. Consequently, any pair of two dichotomic observables in arbitrary finite dimension $d$ are jointly measurable if and only if they satisfy Eq.~\eqref{JM_condition}.
\end{proof}
}
\emph{The MH quasi-probability and joint measurability —} Given a density matrix $\rho$, the MH distribution \cite{MH1961} is defined as 
\begin{eqnarray}
q(a_1,a_2)=\frac{1}{2}Tr[(\pi^{a_1}_{A_1}\pi^{a_2}_{A_2}+\pi^{a_2}_{A_2}\pi^{a_1}_{A_1})\rho]\label{MHD}
\end{eqnarray}
satisfying the normalization condition   $\sum\limits_{a_{1},a_{2}} q(a_{1},a_{2})=1$.  Here $\pi^{a_{1}}_{A_1}$ and $\pi^{a_{2}}_{A_2}$ are projectors associated with dichotomic observables $A_1$ and $A_2$, respectively, where $A_i=\sum_{a_i=\pm 1}a_i \pi^{a_i}_{A_{i}}\ \forall i \in \{1,2\}$. Importantly, $q(a_1,a_2)$ provides the correct marginals as  

\begin{subequations}
\begin{eqnarray}
\label{oi1}
	p(a_2)=\sum\limits_{a_1}	q(a_1, a_2)= Tr[\pi^{a_2}_{A_2}\rho] \\
	\label{oi2}
	p(a_1)=\sum\limits_{a_2}	q(a_1, a_2) = Tr[\pi^{a_1}_{A_1}\rho] 
\end{eqnarray}
\end{subequations}
which may also be considered as the operational non-invasiveness.

Note that, the negativity of a given quasi-probability distribution ($q(a_1,a_2)< 0$) indicates nonclassical behavior \cite{Ferrie_2008,spekkens2008,Jeong_2020}, while its positivity does not necessarily imply classicality. In contrast, the existence of a global POVM, \emph{i.e.}, the compatibility of the two unsharp observables $A_{1}^{\lambda} \text{ and } A_{2}^{\mu}$ can also be taken as a hallmark of classical behavior. We demonstrate the necessary and sufficient criterion for nonclassicality by establishing a connection between these two manifestations of nonclassical features. For this, let us replace the projectors in the MH distribution in Eq.\eqref{MHD} by unsharp POVM elements, $i.e.$,
\begin{eqnarray}
\label{mhpovm}
    q(a_1,a_2)=\frac{1}{2}
    \bra{\psi}(E^{a_1}_{A_1}E^{a_2}_{A_2}+E^{a_2}_{A_2}E^{a_1}_{A_1})\ket{\psi}\label{qprb2Observable}
\end{eqnarray}
where $E^{a_1}_{A_1}$ and $E^{a_2}_{A_2}$ are defined in Eq. (\ref{unpovm}). Without loss of generality, we may take $\rho=|\psi\rangle\langle\psi|$, with $|\psi\rangle$ being an arbitrary state in any arbitrary finite dimension $d$. We prove the following theorem to demonstrate the connection between the positivity of the MH distribution in Eq.(\ref{mhpovm})  with the  joint measurability for any pair of dichotomic observables $A_1^{\lambda}$ and $A_2^{\mu}$ in any arbitrary finite dimension $d$.
\begin{thm}\label{thm1}
    For any two unsharp dichotomic observables $A_1^{\lambda}$ and $A_2^{\mu}$ in any arbitrary finite dimension $d$,\\
    
        Positivity of MH  distribution $\Leftrightarrow$ Joint measurability
\end{thm}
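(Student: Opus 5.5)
The plan is to show that the Hermitian operator defining the MH distribution in Eq.~\eqref{mhpovm} is exactly a member of the family of candidate grand POVMs $G^{a_1,a_2}(Z)$ of Eq.~\eqref{GPOVM_2obr}, namely the one with $Z=C=\half\{A_1,A_2\}$. Positivity of that operator will then be compared with the block-wise criterion Eq.~\eqref{JM_per_block} of Theorem~1. Here ``positivity of the MH distribution'' means $q(a_1,a_2)\geq 0$ for all $a_1,a_2$ and all states $\ket{\psi}$.

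\emph{Step 1.} Substitute Eq.~\eqref{unpovm} into Eq.~\eqref{mhpovm} and symmetrize. Since $\half(A_1A_2+A_2A_1)=C$, this gives
\begin{eqnarray}
q(a_1,a_2)=\frac{1}{4}\bra{\psi}\big(\mathbb{I}+a_1\lambda A_1+a_2\mu A_2+a_1a_2\lambda\mu C\big)\ket{\psi}=\bra{\psi}G^{a_1,a_2}(C)\ket{\psi}. \nonumber
\end{eqnarray}
Positivity of $q$ for all states is therefore equivalent to $G^{a_1,a_2}(C)\succeq 0$ for all $a_1,a_2$. The operators $G^{a_1,a_2}(C)$ automatically reproduce the marginals and sum to $\mathbb{I}$. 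Hence the direction ``positivity $\Rightarrow$ joint measurability'' is immediate: $G^{a_1,a_2}(C)$ is itself a grand POVM.

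\emph{Step 2 (converse).} Assume joint measurability, so by Theorem~1 the condition Eq.~\eqref{JM_per_block} holds for every Jordan block. $C$ is block diagonal in the Jordan decomposition, so it suffices to check $G_r^{a_1,a_2}(C_r)\succeq 0$ block by block.
\begin{itemize}
\item In a 1D block, $A_{1_r},A_{2_r}=\pm1$ and $G_r=\frac14(1+a_1\lambda A_{1_r})(1+a_2\mu A_{2_r})\geq 0$ trivially.
\item In a 2D block, write $A_{1_r}=\vec{x}\cdot\vec\sigma$ and $A_{2_r}=\vec{y}\cdot\vec\sigma$ with unit vectors $\vec{x},\vec{y}$, so that $C_r=c_r\mathbb{I}_2$ with $c_r=\vec{x}\cdot\vec{y}$. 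Setting $s=a_1a_2$, the eigenvalues of $4G_r$ are
\begin{eqnarray}
1+s\lambda\mu c_r\pm|a_1\lambda\vec{x}+a_2\mu\vec{y}|=1+s\lambda\mu c_r\pm\sqrt{\lambda^2+\mu^2+2s\lambda\mu c_r}. \nonumber
\end{eqnarray}
Because $|c_r|\leq1$ and $\lambda,\mu\leq1$, we have $1+s\lambda\mu c_r\geq0$. Positivity for both signs $s=\pm1$ is thus equivalent, after squaring, to $1+\lambda^2\mu^2c_r^2\geq\lambda^2+\mu^2$.
\end{itemize}

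\emph{Step 3 (main obstacle).} It remains to show that this is exactly the block condition Eq.~\eqref{JM_per_block}. Write $x_\pm=\lambda^2+\mu^2\pm2\lambda\mu c_r$. Squaring $\sqrt{x_+}+\sqrt{x_-}\leq2$ gives $2\sqrt{x_+x_-}\leq 4-2(\lambda^2+\mu^2)$. One must first check that the right-hand side is nonnegative; this follows from either inequality, since each implies $\lambda^2+\mu^2\leq 1+\lambda^2\mu^2c_r^2\leq2$. Squaring once more yields
\begin{eqnarray}
(\lambda^2+\mu^2)^2-4\lambda^2\mu^2c_r^2\leq(2-\lambda^2-\mu^2)^2, \nonumber
\end{eqnarray}
which simplifies to $1+\lambda^2\mu^2c_r^2\geq\lambda^2+\mu^2$. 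Both squarings are reversible under the stated sign conditions, so the two conditions coincide block by block.

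Combining the blocks, $G^{a_1,a_2}(C)\succeq0$ holds if and only if Eq.~\eqref{JM_condition} holds. The care needed is in tracking these sign conditions so that the equivalence is genuinely two-sided.
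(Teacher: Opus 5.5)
Your proof is correct and follows the paper's route. The forward direction identifies the MH operator with the grand-POVM candidate $G^{a_1,a_2}(C)$, and the converse uses Theorem~1 on each Jordan block to obtain $\sqrt{\lambda^2+\mu^2\pm2\lambda\mu c_r}\le 1\pm\lambda\mu c_r$; the only difference is that you read these off as nonnegativity of the eigenvalues of $G_r^{a_1,a_2}(C_r)$, via the symmetric form $\lambda^2+\mu^2\le 1+\lambda^2\mu^2c_r^2$, and then use block-diagonality, whereas the paper gets them by one rearranged squaring and bounds expectation values with Cauchy--Schwarz and a triangle inequality over blocks.
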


\begin{proof}
   Using Eq.~\eqref{qprb2Observable}, the positivity of MH distribution can explicitly be written as 
\begin{subequations}\label{Mhd_explicit}
\begin{eqnarray}
\label{qpp}
q(++)=\frac{1}{4}\langle\psi|(\mathbb{I}+\lambda A_1 +\mu A_2+ \lambda \mu C)|\psi\rangle\geq 0\\
\label{qpm}
q(+-)=\frac{1}{4}\langle\psi|(\mathbb{I}+\lambda A_1 -\mu A_2 - \lambda \mu C)|\psi\rangle \geq  0\\
\label{qmp}
q(-+)=\frac{1}{4}\langle\psi|(\mathbb{I}-\lambda A_1 +\mu A_2- \lambda \mu C)|\psi\rangle \geq  0\\
\label{qmm}
q(--)=\frac{1}{4}\langle\psi|(\mathbb{I}-\lambda A_1 -\mu A_2+ \lambda \mu C)|\psi\rangle\geq  0
\end{eqnarray}
\end{subequations}
where $C=\frac{\{A_1,A_2\}}{2}$ and $|\psi\rangle$ is arbitrary.

{ 
It therefore follows that, whenever the MH distribution is nonnegative, there exists a global POVM of the form given in Eq.~\eqref{GPOVM_2obr}, with $Z\equiv C=\half\{A_1,A_2\}$. In particular, one may define grand POVM as,
\begin{eqnarray}\label{MH_quasi_prb}
    G^{a_1,a_2}_{MH}=\frac{1}{4}\left(\mathbb{I}+ a_1\lambda A_1 + a_2\mu A_2+ a_1 a_2\lambda \mu C\right)\succeq 0,
\end{eqnarray}
where $a_1,a_2\in\pm 1$. The operators $G^{a_1,a_2}_{MH}$ are positive semidefinite and reproduce the required marginals. Consequently, the Eq.~\eqref{MH_quasi_prb} always satisfies the joint measurability criterion in Eq~\eqref{JM_condition}. This is due to the fact that Eq.~\eqref{JM_condition} is derived by considering general grand POVMs. Therefore, whenever the MH distribution is positive, the unsharp observables $A_1^{\lambda}$ and $A_2^{\mu}$ are jointly measurable.

To prove the converse, we need  to prove that if Eq.~(\ref{JM_condition}) holds then all four MH distribution are positive. From Eq.~(\ref{JM_condition}) for each block $\mathcal{H}_r$, we derive the following expression by re-arranging and squaring it, leading to two subsequent equations.
{\small\begin{subequations}
     \begin{eqnarray}
      \lambda^2+\mu^2 -2\lambda \mu  c_r \leq 4 + \left(\lambda^2+\mu^2 +2\lambda \mu  c_r\right)
 - 4\sqrt{\lambda^2+\mu^2 +2\lambda \mu  c_r}\nonumber\\ \\
 \lambda^2+\mu^2 +2\lambda \mu  c_r \leq 4 + \left(\lambda^2+\mu^2 -2\lambda \mu  c_r\right) 
 - 4\sqrt{\lambda^2+\mu^2 -2\lambda \mu  c_r}\nonumber\\
 \end{eqnarray}
\end{subequations}
}

Here, $c_r=\half \langle\{A_{1_r},A_{2_r}\}\rangle$ is the eigenvalue of $C$ in the $\mathcal{H}_r$ block. This leads to 
\begin{subequations}
     \begin{eqnarray}
 \sqrt{\lambda^2+\mu^2 +2\lambda \mu  c_r} \leq 1+\lambda \mu  c_r\\
 \sqrt{\lambda^2+\mu^2 -2\lambda \mu  c_r} \leq 1-\lambda \mu  c_r    
 \end{eqnarray}
\end{subequations}
respectively.  Note that $\sqrt{\lambda^2+\mu^2 \pm 2\lambda \mu  c_r}\equiv || (\lambda A_1 \pm \mu A_2)||_r, \ \forall {r\in \mathcal{R}}$. Again, Cauchy–Schwarz inequality implies,
     $|\langle \psi_r | \lambda A_{1_r} \pm \mu A_{2_r}|\psi_r\rangle| \leq || \lambda A_{1_r} \pm \mu A_{2_r}|| \forall{r\in \mathcal{R}}$ and we obtain,
 \begin{eqnarray}
     |\bra{\psi_r} \lambda A_{1_r} \pm \mu A_{2_r}\ket{\psi_r}| \leq 1 \pm \lambda \mu \bra{\psi_r} c_r \mathbb{I}_r\ket{\psi_r}
 \end{eqnarray}
 
 Here, $C=\bigoplus_r c_r \mathbb{I}_r\equiv \half\{A_1,A_2\}$ is block diagonal in the same decomposition as $A_1$ and $A_2$. Thus, we can write
 {\small\begin{eqnarray}
     \sum_{r\in \mathcal{R}}|\left(\langle \psi| \Pi_r\right) ( \lambda A_{1} \pm \mu A_{2}) (\Pi_r|\psi\rangle)| \leq \sum_{r\in \mathcal{R}}\left(\langle \psi| \Pi_r\right)( \mathbb{I} \pm \lambda \mu C) (\Pi_r|\psi\rangle) 
     \end{eqnarray}}
 
 Using triangle inequality, $\left|\sum_ru_r\right|\leq\sum_r|u_r|$ we obtain,
 \begin{eqnarray}
\label{lastEq}
  \left|\langle \psi|(\lambda A_1 \pm \mu A_2)|\psi\rangle\right|\leq 1\pm\lambda \mu \langle \psi|C|\psi\rangle
\end{eqnarray}
 
 That directly indicates that all four MH quasi-probabilities Eqs.~(\ref{qpp}-\ref{qmm}) are positive. This completes the proof.}
\end{proof}

Since the positivity of MH distribution implies joint measurability and converse also holds, we conclude that MH distribution for two dichotomic unsharp observables constitutes a necessary and sufficient condition for classicality. However, the similar conclusion may not be made for other forms of quasi-probability distribution as positivity of them remains inconclusive.   

\emph{Generalization to an arbitrary number of unbiased POVMs —}
We extend the above formalism by defining a MH-like quasi-probability distribution for an arbitrary number of dichotomic unsharp observables. For three unbiased dichotomic POVMs, we define a quasi-probability distribution as

\begin{eqnarray}
    q(a_1,a_2,a_3)=\frac{1}{4}\bra{\psi}\{E^{a_1}_{A_1},\{E^{a_2}_{A_2},E^{a_3}_{A_3}\}\}\ket{\psi}\label{q123}
\end{eqnarray}
Here, we consider the unsharpness parameter $\lambda_i$ for every observable $A_i$ where the POVM element is $E^{a_i}_{A_i}=\frac{\mathbb{I}+\lambda_i a_i A_i}{2}$, where $i\in\{1,2,3\}$. From Eq.~\eqref{q123}, we then get
\begin{eqnarray}
    &q(a_1,a_2,a_3)
  = \frac{1}{8}\bra{\psi}\Big[\mathbb{I}
    +  \big(\lambda_1 a_1 A_1+ \lambda_2  a_2 A_2+\lambda_3  a_3 A_3\big)& \nonumber\\
  &+ \frac{1}{2} \big(\lambda_1 \lambda_2 a_1a_2\{A_1,A_2\}  + \lambda_2 \lambda_3 a_2a_3 \{A_2,A_3\}+ \lambda_3 \lambda_1 a_3a_1 \{A_3,A_1\}\big) &\nonumber\\
  &+ \frac{\lambda_1 \lambda_2 \lambda_3}{4} a_1a_2a_3\{A_1, \{A_2,A_3\}\}\Big]\ket{\psi}&
  \label{3ObservableGPOVM}
\end{eqnarray}

It can be readily verified that  $q(a_1,a_2,a_3)$ satisfies the condition of normalization $\sum_{a_1,a_2,a_3} q(a_1,a_2,a_3)=1$, and reproduce the marginals, 
\begin{eqnarray}
 p(a_k) \equiv \sum_{a_i,a_j\in\{+1,-1\}} q(a_i,a_j,a_k) \ \ \forall i,j,k\in\{1,2,3\}
\end{eqnarray} 
The positivity of all eight quasi-probabilities $q(a_1,a_2,a_3)$ imposes a criterion for the joint measurability of three dichotomic unsharp observables $A_1^{\lambda_1},A_2^{\lambda_2}$ and $A_3^{\lambda_3}$ of arbitrary dimension as,  
\begin{eqnarray}
    & \sum_{a_1,a_2,a_3\in\{+1,-1\}}|| \lambda_1 a_1 A_1+ \lambda_2 a_2 A_2+ \lambda_3 a_3 A_3& \nonumber \\ &+\frac{\lambda_1 \lambda_2 \lambda_3}{4} a_1a_2a_3\{A_1, \{A_2,A_3\}\}||\leq 8& \label{jtc3Ob}
\end{eqnarray}
This is a sufficient condition for joint measurability of three unsharp POVMs. However, for the case of mutually anticommuting observables $\{A_i\}_{i=1}^{3}$, Eq.~\eqref{jtc3Ob} becomes 
\begin{eqnarray}
    \sum_{a_1,a_2,a_3\in\{+1,-1\}}||\lambda_1 a_1 A_1+ \lambda_2 a_2 A_2+ \lambda_3 a_3 A_3||\leq 8 \label{JM_3obsr}
\end{eqnarray}

Now, if one considers the same unsharpness parameter $\lambda$ for all three anti-commuting observables, then Eq. (\ref{JM_3obsr}) gives the necessary and sufficient conditions. This directly implies that $\lambda \leq \frac{1}{\sqrt{3}}$. For three anti-commuting qubit observables, this is the well-known  condition for joint measurability \cite{CarmeliIW2019,LIANG20111}. Our construction shows that precisely the same bound governs the joint measurability of three dichotomic observables in arbitrary finite dimension. Hence, whenever Eq.~\eqref{JM_3obsr} is satisfied, the three dichotomic observables $A_1^{\lambda_1}, A_2^{\lambda_2} \text{ and } A_3^{\lambda_3}$ are jointly measurable. In this way, Eq.~\eqref{JM_3obsr} provides a simple and dimension-independent sufficient condition for the joint measurability of three dichotomic observables. Note that we derived the Eq.~\eqref{JM_3obsr} from the positivity of the quasi-probability which has previously been identified as sufficient condition for joint measurability condition in  general probabilistic theory framework \cite{mukherjee24}.  

For an arbitrary $n$ number of unbiased dichotomic observables, we propose the MH-like quasi-probability as,
\begin{eqnarray}
q(a_1,a_2,\dots ,a_n)=\frac{1}{2^{n-1}}\bra{\psi}\{E^{a_1}_{A_1},\dots,\{E^{a_{n-1}}_{A_{n-1}},E^{a_n}_{A_{n}}\}\}\ket{\psi}\label{GPOVM_N_Observ}
\end{eqnarray}
  By considering $q(a_1,a_2,\dots ,a_n)>0$, from Eq. (\ref{GPOVM_N_Observ}), we can derive the sufficient condition for joint measurability similar to Eq. (\ref{3ObservableGPOVM}). For
  $n$ mutually anticommuting observables in an arbitrary finite dimension we derive the sufficient condition for joint measurability as

\begin{eqnarray}
\label{njm}
\sum_{a_1,a_2,\dots ,a_n\in\{+1,-1\}}||\sum_{i\in\{1,2,\dots,n\}} \lambda_i a_iA_i||\leq 2^{n}
\end{eqnarray}
It is proved \cite{mukherjee24} that for the case $\lambda_i=\lambda\  \forall i$, the above relation  is the sufficient condition for joint measurability in general probabilistic theories. 
Utilizing the fact that the observables are anticommuting, it is straightforward to find, for $\lambda_i=\lambda$ $\forall i$, we have $\lambda \leq \frac{1}{\sqrt{n}}$, for which $n$ dichotomic unbiased measurements are compatible. Therefore, positivity of MH-like quasi-probability distribution $q(a_1,a_2,\dots ,a_n)$ serves as a sufficient condition for a notion of classicality, namely, the  joint measurability of $n$ dichotomic observables in any arbitrary finite dimension.

\emph{Direct test of joint measurability—}
We start by noting that the negative quasi-probability has been tested experimentally \cite{Lostaglio2023kirkwooddirac,Ryu2019,HernandezGomez2024Interferometry}. The Theorem~\ref{thm1} establishes that  the experimental test of MH distribution directly test the measurement incompatibility. To test the MH distribution we introduce an experimental scheme motivated from the quantum switch experiment \cite{Chiribella2013,Goswami2018,guo2020,BAN2021,PAN2023indefinitecausalorder}.


We consider a modified Mach–Zehnder interferometer, shown in Fig.~\ref{Fig1}. We use two different degrees of freedom of a single photon, the path $\{\ket{\psi_1},\ket{\psi_2}\}$ and the polarization $\{\ket{H},\ket{V}\}$. Consider the initial state of the photon is $\ket{\psi}\ket{\chi}$. Upon entering the first beam splitter (BS1), the photon splits into two paths, directed along the vertical path $\ket{\psi_1}$ and the horizontal path $\ket{\psi_2}$ so that the state of the photon becomes
\begin{eqnarray}
   |\Psi\rangle= \frac{1}{\sqrt{2}} \big(\ket{\psi_1} + \ket{\psi_2} \big)\ket{\chi},
\end{eqnarray}
where $\ket{\chi}=\alpha\ket{H}+\beta \ket{V}$ represents the state of the system (polarization).

\begin{figure}[ht]
\includegraphics[height=65mm, width=85mm]{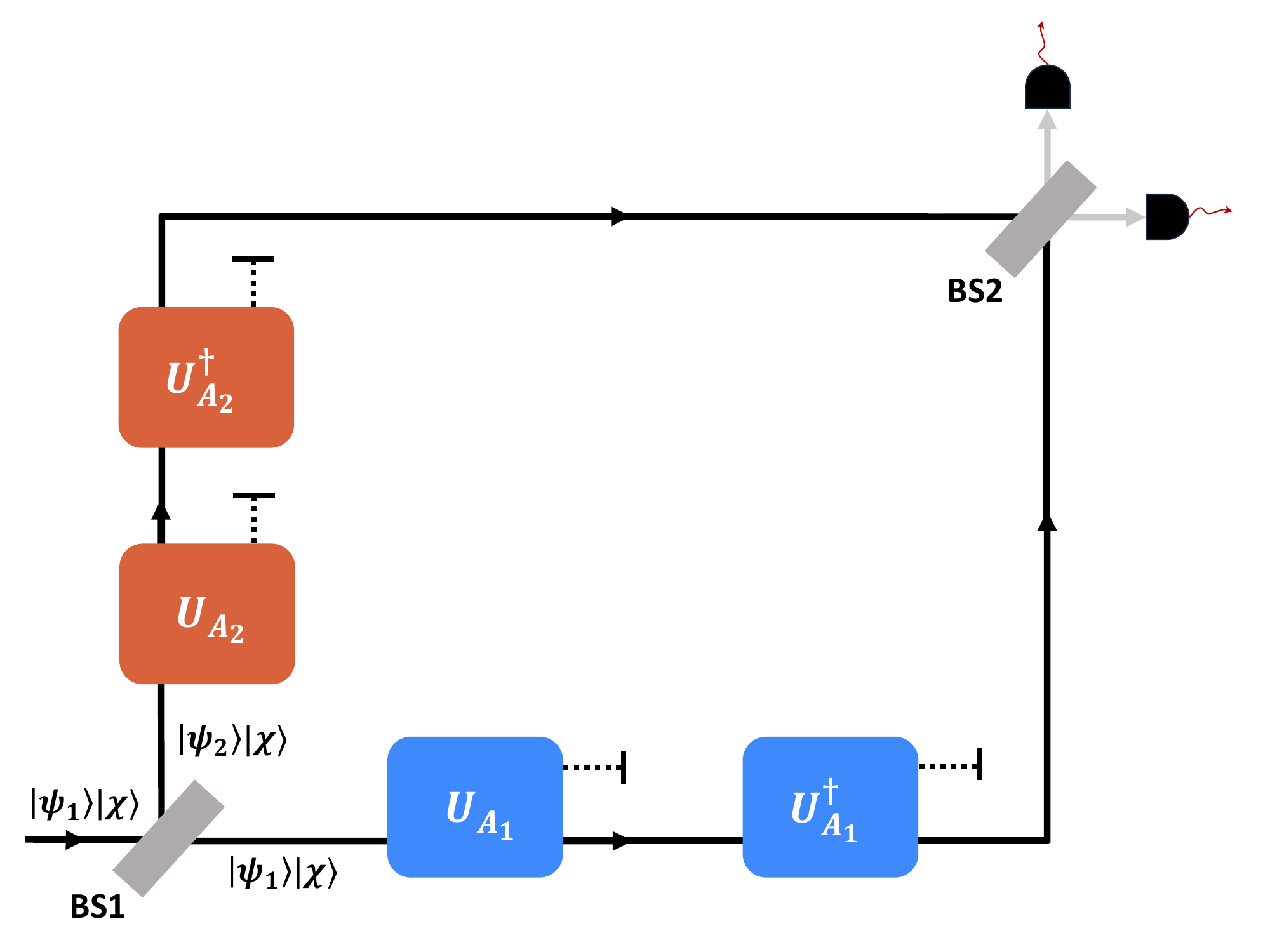}
\caption{An interferometric setup for testing MH quasi-probability distribution. See text for detailed description.}\label{Fig1}
\end{figure}

 For our purpose, we employ the well-known stinespring dilation \cite{Stinespring1955} to the polarization degree of freedom  by introducing an ancilla, so that the measurement outcome is stored in the ancilla without destroying coherence. The action of stinespring dilation through the unitary $U_A$ corresponds to the measurement $A \equiv \{N_A^a\}_a$ is given by,
\begin{eqnarray}
   \ket{\chi}\ket{0}\xrightarrow{U_A}\sum_{a\in \{\pm1\}} N^a_A\ket{\chi}\ket{a}
\end{eqnarray}
where $N^{a}_A$ is the Kraus operator with $(N^{a}_A)^{\dagger}N^{a}_A+(N^{\bar{a}}_A)^{\dagger}N^{\bar{a}}_A=\mathbb{I}$.

In our setup, along the paths  $\ket{\psi_1}$ and $\ket{\psi_2}$, we introduce two different unitaries, $U_{A_1}$ and $U_{A_2}$ respectively, associated with the observables $A_1$ and $A_2$ . Here,  $E_{A_1}^{a_1}$ and $E_{A_2}^{a_2}$ are POVM elements corresponding to outcomes $a_1$ and $a_2$ , and $E_{A_i}^{a_i} = (N_{A_i}^{a_i})^{\dagger} N_{A_i}^{a_i}$ for $i \in \{1,2\}$, where $N_{A_i}^{a_i}$ denotes the Kraus operators. We take the following form of unitary,
\begin{eqnarray}
    U_{A_i} =  N^{a_i}_{A_i} \otimes |0\rangle\langle 0| + \sqrt{\mathbb{I} - (N^{a_i}_{A_i})^{\dagger}N^{a_i}_{A_i}} \otimes |1\rangle\langle 0|\nonumber\\
    + \sqrt{\mathbb{I} -N^{a_i}_{A_i}(N^{a_i}_{A_i})^{\dagger}} \otimes |0\rangle\langle 1| - (N^{a_i}_{A_i})^{\dagger}\otimes |1\rangle\langle 1|
\end{eqnarray}
Along the path $\ket{\psi_1}$ ($\ket{\psi_2}$), the photon first passes through unitary operation $U_{A_1}$ ($U_{A_2}$), after which the ancilla is postselected in the state $\ket{0}$. Subsequently, the photon is subjected to the inverse operation $U^{\dagger}_{A_1}$ ($U^{\dagger}_{A_2}$), again followed by postselection of the ancilla in the state $\ket{0}$. Note here that the unitary and its inverse are applied in sequence. It may seem that the state remains unchanged but because the ancillary Hilbert space is different in the two cases, this does not happen. Therefore, the state before BS2 becomes,
\begin{eqnarray}
    \ket{\Psi^{\prime}}&=\frac{\mathcal{C}}{\sqrt{2}} \big(\ket{\psi_1} \otimes (N^{ a_1}_{A_1})^{\dagger}N^{a_1}_{A_1}\ket{\chi}+ \ket{\psi_2} \otimes (N^{ a_2}_{A_2})^{\dagger} N^{a_2}_{A_2}\ket{\chi} \big)&\nonumber\\
   & \equiv \frac{\mathcal{C}}{\sqrt{2}} \big(\ket{\psi_1} \otimes E^{a_1}_{A_1}\ket{\chi}+ \ket{\psi_2} \otimes E^{a_2}_{A_2}\ket{\chi} \big)&
\end{eqnarray}
Due to postselection we have to re-normalize the state and  $\mathcal{C}=\sqrt{\frac{2}{\bra{\chi}( E_{A_1}^{a_1})^2+( E_{A_2}^{a_2})^2\ket{\chi}}}$ , is the normalization constant. Upon the action of BS$2$, the states transform as $\ket{\psi_1} \xrightarrow{BS2} \frac{1}{\sqrt{2}}\big(\ket{\psi_3} + \ket{\psi_4}\big)$ and $\ket{\psi_2} \xrightarrow{BS2} \frac{1}{\sqrt{2}}\big(\ket{\psi_3} - \ket{\psi_4}\big)$, yielding the final state,
\begin{eqnarray}
     \ket{\Psi_F}=\frac{\mathcal{C}}{{2}}  \left( \ket{\psi_3}\otimes \bigl(E^{a_1}_{A_1}+E^{a_2}_{A_2}\bigr) \ket{\chi}+ \ket{\psi_4}\otimes\bigl(E^{a_1}_{A_1}-E^{a_2}_{A_2}\bigr) \ket{\chi}\right) \nonumber\\
\end{eqnarray}

Now, we consider the measurement of the path observable $P_{\Psi}=\ketbra{\psi_3}{\psi_3}- \ketbra{\psi_4}{\psi_4}$, whose expectation  value corresponds to the difference of photon counts between two paths $\ket{\psi_3}$ and $\ket{\psi_4}$. The expectation value of $P_{\Psi}$ is then given by, 
\begin{eqnarray}
    \langle P_{\Psi}\rangle=\bra{\Psi_F}\Big(\ketbra{\psi_3}{\psi_3}\otimes \mathbb{I}- \ketbra{\psi_4}{\psi_4}\otimes \mathbb{I}\Big)\ket{\Psi_F} 
\end{eqnarray}
which yields the MH quasi-probability up to a positive scaling factor, as

\begin{eqnarray}
    \langle P_{\Psi}\rangle=\frac{|\mathcal{C}|^2}{4}\bra{\chi}\Bigl(E^{a_1}_{A_1}E^{a_2}_{A_2}+E^{a_2}_{A_2}E^{a_1}_{A_1}\Bigr)\ket{\chi}\equiv\frac{|\mathcal{C}|^2}{2} q(a_1,a_2),
\end{eqnarray}
Hence, the interferometric setup in Fig. \ref{Fig1} gives the MH quasi-probability distribution  through the expectation value of $P_{\Psi}$. This in turn provides a test of  measurement incompatibility. In particular, any observed negativity of $q(a_1,a_2)$ experimentally certifies the incompatibility of two unsharp observables $A_{1}^{\lambda} \text{ and } A_{2}^{\mu}$.

\emph{Summary and Outlook—}
In sum,  our work establishes a direct connection between  the nonclassical behaviors featured in negative  quasi-probability and measurement incompatibility of unsharp dichotomic POVM in any arbitrary finite dimension $d$. {We derive a necessary and sufficient condition for joint measurability of any pair of unsharp dichotomic POVM in any arbitrary finite dimension $d$ in Eq.~\eqref{JM_condition}.} We further demonstrate an operational equivalence between the positive MH distribution and the joint measurability, $i.e.,$ the positivity of this distribution validate the compatibility of two arbitrary dichotomic measurements and converse also holds true. We further propose a MH-like quasi-probability distribution for an arbitrary $n$ number of unbiased dichotomic observables in any arbitrary finite dimension and  derive a sufficient condition of joint measurability for $n$ mutually anticommuting observables. Finally, we put forward an interferometric arrangement, inspired by quantum-switch architectures, that directly measures the MH quasi-probability and thus offers an experimental method for testing measurement incompatibility.

One promising future direction is to generalize the present scenario beyond dichotomic unbiased measurement to scenarios with more than two outcomes and for the case of biased POVMs. For the $n$ unsharp POVMs, strengthening the currently known sufficient condition to a necessary and sufficient one, and more precisely identifying the corresponding quasi-probability distribution, would constitute another interesting avenue for future research. 

\acknowledgments
 P.P. acknowledges funding from the University Grants Commission (NTA Ref. No.-231610049800), Govt. of India. A.K.P. acknowledges the support from the Research Grant SERB/CRG/2021/004258, Government of India. 

\bibliography{ref}
\end{document}